\newcommand {\N} {\mathbb {N}}
\newcommand {\eps} {\varepsilon}
\newcommand {\R} {\mathbb {R}}
\newcommand {\eqdef}{:=}
\newcommand {\script} [1] {\ensuremath {\mathcal {#1}}}
\newcommand {\etal} {\textit {et al.}}
\newcommand {\D} {\script {D}}
\DeclareMathOperator{\ch}{ch}
\newcommand {\onion} {\raisebox{-0.25ex}{\includegraphics{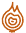}}}
\newtheorem{theorem}{Theorem}[section]
\newtheorem{lemma}[theorem]{Lemma}
\newtheorem{obs}[theorem]{Observation}
\title{%-
  Unions of Onions: 
   Preprocessing Imprecise Points for Fast Onion
   Decomposition\thanks{A preliminary version appeared as 
   M.~L\"offler and W.~Mulzer. \emph{Unions of Onions: Preprocessing 
   Imprecise Points for Fast Onion Layer Decomposition}.
   Proc. 13th WADS, pp. 487--498, 2013.}
}
\author
{
  Maarten L\"offler\thanks{
    Dept. of Information and Computing Sciences, 
      Universiteit Utrecht, the Netherlands,
    \texttt{m.loffler@uu.nl}}
   \and Wolfgang Mulzer\thanks{
    Institut f\"ur Informatik, Freie Universit\"at Berlin, Germany, 
    \texttt{mulzer@inf.fu-berlin.de}
    }
}
\begin{document}
\maketitle

\begin{abstract}
   Let $\D$ be a set of $n$ pairwise disjoint unit disks in the plane.
   We describe how to build a data structure for $\D$ so that
   for any point set $P$ containing
   exactly one point from each disk, we can quickly find the
   onion decomposition (convex layers) of $P$.

   Our data structure can be built in $O(n \log n)$ time
   and has linear size. Given $P$, we can find its 
   onion decomposition in $O(n \log k)$ time, where $k$ is the number of layers.
   We also provide a matching lower bound.
   
   Our solution is based on a recursive space decomposition,
   combined with a fast algorithm to compute the union of two disjoint onion 
   decompositions.
\end{abstract}

\section {Introduction}

Let $P$ be a planar $n$-point set. Take the convex hull of $P$ and remove
it; repeat until $P$ becomes empty.
This process is called \emph{onion peeling}, and the resulting
decomposition of $P$ into convex polygons is the
\emph{onion decomposition}, or \emph {onion} for short, of $P$.
It can be computed in $O(n \log n)$ time~\cite {c-clps-85}.
Onions provide a natural, more robust, generalization of the convex hull,
and they have applications in pattern recognition, statistics, and 
planar halfspace range searching~\cite{cgl-tpogd-85,h-rsar-72,sf-clntrpdps-99}.

Recently, a new paradigm has emerged for modeling data
imprecision.
Suppose we need to compute some interesting property of
a planar point set. Suppose further that we have some advance
knowledge about the possible locations of the points, e.g.,
from an imprecise sensor measurement. We would like to preprocess
this information, so that once the precise inputs are available,
we can obtain our structure faster.
We will study  the complexity of computing onions
in this framework.

\subsection{Related Work}

The notion of onion decompositions first appears in the 
computational statistics literature~\cite{h-rsar-72}, and several rather 
brute-force algorithms to compute it have been suggested 
(see \cite{e-chp-82} and the references therein).
In the computational geometry community, Overmars and van 
Leeuwen~\cite{ol-mcp-81} presented the first near-linear time algorithm,  
requiring $O(n \log^2 n)$ time.
Chazelle~\cite{c-clps-85} improved this to an optimal $O(n \log n)$ time 
algorithm.
Nielsen~\cite{n-ospcml-96} gave an output-sensitive algorithm to compute 
only the outermost $k$ layers in $O(n \log h_k)$ time, where $h_k$ is the 
number of vertices participating on the outermost $k$ layers.
In $\R^3$, Chan~\cite{c-addsf3cha2nnq-10} described an $O (n \log^{6} n)$ 
expected time algorithm.

The framework for preprocessing regions that represent points was first 
introduced by Held and Mitchell~\cite {hm-tipps-08}, who show how to store 
a set of disjoint unit disks in a data structure such that any point set 
containing one point from each disk can be triangulated in linear time.
This result was later extended to arbitrary disjoint regions in the plane 
by van Kreveld~\etal~\cite{klm-pipast-10}.
L\"offler and Snoeyink first showed that the Delaunay triangulation (or its 
dual, the Voronoi diagram) can also be computed in linear time after 
preprocessing a set of disjoint unit disks~\cite{ls-dtoipiltap-10}.
This result was later extended by Buchin~\etal~\cite{blmm-pipfdtsae-11},
and Devillers gives a practical alternative~\cite{d-dtoippaagafqt-11}.
Ezra and Mulzer~\cite{em-choipitap-13} show how to preprocess a set of 
lines in the plane such that the convex hull of a set of points with 
one point on each line can be computed faster than $n \log n$ time.

These results also relate to the \emph{update complexity} model. 
In this paradigm, the input values or points come with some uncertainty,
but it is assumed that during the execution of the algorithm, the values or 
locations can be obtained exactly, or with increased precision, at a 
certain cost. 
The goal is then to compute a certain combinatorial property or structure 
of the precise set of points, while minimising the cost of the updates 
made by the 
algorithm~\cite{bhkr-eusgcu-05,fggt-chapa-94,hekmr-cmstu-08,tk-itet-11}.

\tweeplaatjes [scale=.99] {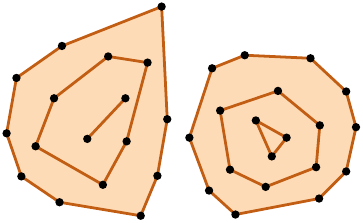} {intro-example-out} 
{ (a) Two disjoint onions. (b) Their union.}

\subsection{Results}

We begin by showing that the union of two disjoint onions can be computed in 
$O (n + k^2 \log n)$ time, where $k$ is the number of layers in the
resulting onion.

We apply this algorithm to obtain an efficient solution to the
onion preprocessing problem mentioned in the introduction.
Given $n$ pairwise disjoint unit disks that model
an imprecise point set, we build a data structure of size
$O(n)$ such that the onion decomposition of an instance can be 
retrieved in $O(n \log k)$ time, where $k$ is 
the number of layers in the resulting onion.
We present several preprocessing algorithms. 
The first is very simple and achieves $O(n \log n)$
expected time. The second and third algorithm make this guarantee 
deterministic, at the cost of worse constants and/or a more involved
algorithm.

We also show that the dependence on $k$ is necessary:
in the worst case, any comparison-based algorithm
can be forced to take $\Omega (n \log k)$ time on some instances.

\section {Preliminaries and Definitions}

Let $P$ be a set of $n$ points in $\R^2$. 
The \emph {onion decomposition}, or \emph {onion}, of $P$, is the 
sequence $\onion(P)$ of nested convex polygons with vertices 
from $P$, 
constructed recursively as follows:
if $P \neq \emptyset$,  we set 
$\onion(P) \eqdef \{\ch(P)\} \cup \onion(P \setminus \ch(P))$,
where $\ch(P)$ is the convex hull of $P$;
if $P = \emptyset$, then 
$\onion(P) \eqdef \emptyset$~\cite {c-clps-85}.
An element of $\onion(P)$ is called a \emph{layer} of $P$.
We represent the layers of $\onion(P)$ as dynamic balanced binary search
trees, so that operations \emph{split} and \emph{join} can be performed
in $O(\log n)$ time.

Let $\D$ be a set of disjoint unit disks in $\R^2$. We say a point set 
$P$ is a \emph {sample} from $\D$ if every disk in $\D$ contains
exactly one point from $P$. 
We write $\log$ for the logarithm with base $2$.

\section {Main Result}

Our data structure and accompanying query algorithm require several pieces, to be described in the
following sections.

\subsection {Unions of Onions}

Suppose we have two point sets $P$ and $Q$, together with their onions. 
We show how to find $\onion(P \cup Q)$ quickly, given that $\onion(P)$
and $\onion(Q)$ are disjoint, given that $\ch(P)$ and $\ch(Q)$ do not overlap.
Deleting points can only decrease
the number of layers, so:

\begin {obs} \label {obs:k}
  Let $P, Q \subseteq \R^2$. 
  Then $\onion(P)$ and $\onion(Q)$ cannot have more layers than 
  $\onion(P \cup Q)$. \hfill$\Box$
\end {obs}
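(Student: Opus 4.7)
The plan is to prove the stronger statement that whenever $S \subseteq R$ are finite planar point sets, $|\onion(S)| \leq |\onion(R)|$. The observation then follows immediately by applying this to $(S,R) = (P, P \cup Q)$ and, symmetrically, to $(S,R) = (Q, P \cup Q)$.

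I would argue by induction on $k = |\onion(R)|$. The base case $k = 0$ forces $R = \emptyset$, hence $S = \emptyset$ and $|\onion(S)| = 0$. For the inductive step, the key fact is that any point of $S$ that is a vertex of $\ch(R)$ is also a vertex of $\ch(S)$: if a closed halfplane $H$ contains $R$ and meets it only at $p$, then $H$ contains the smaller set $S$ and still meets it only at $p$. Consequently every point of $S$ lying on the outer layer of $R$ is already removed when we peel the outer layer of $S$, so
\[
  S \setminus \ch(S) \;\subseteq\; S \setminus \ch(R) \;\subseteq\; R \setminus \ch(R).
\]
Applying the inductive hypothesis to this nested pair gives $|\onion(S \setminus \ch(S))| \leq |\onion(R \setminus \ch(R))| = k - 1$, and adding one for the outer layer of $S$ yields $|\onion(S)| \leq k$.

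I do not expect any real obstacle here; the only mild subtlety is to be pedantic about reading $\ch(R)$ as the set of hull \emph{vertices} (rather than the polygon), so that $R \setminus \ch(R)$ is the inner point set whose onion has exactly $k-1$ layers. This reading matches the recursive definition of $\onion$ given in the preliminaries, and it makes the containment displayed above literally correct for sets.
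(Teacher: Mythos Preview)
Your proof is correct and matches the paper's approach exactly: the paper offers no argument beyond the single sentence ``Deleting points can only decrease the number of layers'' and then marks the observation with a $\Box$, and your induction is precisely the rigorous version of that monotonicity claim. There is nothing to add; you have simply filled in the details the paper left implicit.
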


The following lemma constitutes the main ingredient of our onion-union 
algorithm.  A \emph{convex chain} is any connected subset 
of a convex closed curve. 

\begin{lemma} \label{lem:chainhull}
  Let $A$ and $B$ be two non-crossing convex polygonal chains. We can find 
  $\ch(A \cup B)$ in $O (\log n)$ time, where $n$ is the total number
  of vertices in $A$ and $B$.
\end{lemma}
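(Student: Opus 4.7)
The plan is to exploit the fact that, because $A$ and $B$ are non-crossing sub-arcs of convex curves, the boundary of $\ch(A \cup B)$ has a very simple structure: it is composed of a sub-chain of $A$, a ``bridge'' segment, a sub-chain of $B$, and a second bridge (ignoring degenerate cases). Hence, if we can locate the four split points where these four pieces meet, we can assemble the balanced BST representation of $\ch(A \cup B)$ by a constant number of split and join operations on the BSTs representing $A$ and $B$, each costing $O(\log n)$.

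Finding these split points amounts to computing two common tangents between $A$ and $B$. For this, I would adapt the classical simultaneous-binary-search used to compute common tangents of two disjoint convex polygons, as in the Overmars--van Leeuwen dynamic convex hull data structure. At each step, maintain an interval of candidate vertices on $A$ and one on $B$, represented implicitly by subtrees of the respective BSTs. By comparing the local tangent direction at the median candidate on each chain with the position of the median on the other chain, one can rule out half of at least one interval per step, yielding an $O(\log n)$ bound per tangent.

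The main obstacle is dealing with the endpoints of the chains: since $A$ and $B$ are not closed curves, their endpoints have no well-defined tangent direction, and an endpoint, rather than an interior vertex, may be the anchor of a bridge. Moreover, one chain could lie entirely inside $\ch$ of the other, in which case $\ch(A \cup B)$ coincides with one of the two hulls and no genuine bridges exist. I would handle all of these cases uniformly by an initial $O(\log n)$ localization step that determines, for each of the four chain endpoints, its position relative to the opposite chain. This partitions the problem into a constant number of canonical sub-cases, on each of which the simultaneous binary search applies verbatim; the total running time remains $O(\log n)$.
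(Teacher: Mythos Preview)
Your proposal is correct and follows the same high-level outline as the paper: first argue that each chain contributes a single contiguous arc to $\ch(A\cup B)$ (the paper proves this explicitly via the ``four alternating points force a crossing'' argument, which you merely assert), and then locate the two bridges by a logarithmic-time tangent search.

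The one noteworthy technical difference is how the open-chain issue is handled. You propose a direct Overmars--van~Leeuwen simultaneous binary search on the chains themselves, together with an initial endpoint-localization pass and a small case analysis for degenerate configurations. The paper instead avoids this case analysis by a reduction: it first closes each chain into a convex polygon (computing $\ch(A)$ and $\ch(B)$ in $O(\log n)$ time), then uses the fact that $A$ and $B$ are disjoint to trim away, again in $O(\log n)$ time, any overlap between $\ch(A)$ and $\ch(B)$ so that the two \emph{polygons} are disjoint, and finally invokes the standard $O(\log n)$ bitangent routine for two disjoint convex polygons (citing Kirkpatrick--Snoeyink). This buys a cleaner argument with no endpoint special-casing, at the price of one extra ``make them disjoint'' step; your route is more self-contained but needs the case analysis you describe. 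Both reach the same $O(\log n)$ bound.
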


\begin{proof}
  Since $A$ and $B$ do not cross, the pieces of $A$ and $B$ that appear on 
  $\ch(A \cup B)$ are both connected. If not, there would be on $\ch(A \cup B)$ 
  four points that alternate between $A$, $B$, $A$, and $B$, in that order. 
  However, the points on $A$ must be connected inside $\ch(A \cup B)$ by
  the polygonal chain; 
  the same holds for the points on $B$. Thus, the chains $A$ and $B$ would cross, which 
  contradicts the assumption of the lemma. 

  Since $A$ and $B$ are convex chains, we can compute
  $\ch(A), \ch(B)$ in $O(\log n)$ time. Furthermore, since $A$ and $B$
  are disjoint, we can also, in $O(\log n)$ time, make sure that
  $\ch(A) \cap \ch(B) = \emptyset$, by removing parts from $A$ or $B$,
  if necessary.
  Now we can 
  find the bitangents of $\ch(A)$ and $\ch(B)$ in 
  logarithmic time~\cite {ks-cctws-95}.
\end{proof}

\begin{lemma} \label{lem:restore}
  Suppose $\onion(P)$ has $k$ layers. Let $A$ be the outer layer of
  $\onion(P)$, and $p, q$ be two vertices of $A$. 
  Let $A_1$ be the points on $A$ between $p$ and $q$, going counter-clockwise. 
  We can find 
  $\onion (P \setminus A_1)$ in $O (k \log n)$ time.
\end{lemma}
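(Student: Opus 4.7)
The plan is to build $\onion(P\setminus A_1)$ one layer at a time, from the outermost inwards, by repeatedly applying Lemma~\ref{lem:chainhull} to a ``leftover'' convex chain and the next original layer. Let $L_1 = A, L_2, \ldots, L_k$ be the layers of $\onion(P)$. I maintain a convex chain $T_{i-1}$, contained in $L_i$, starting with $T_0 \eqdef A \setminus A_1$. At step $i$ I set the new $i$-th layer to $N_i \eqdef \ch(T_{i-1} \cup L_{i+1})$ and let $T_i$ be the sub-arc of $L_{i+1}$ strictly interior to $N_i$. Lemma~\ref{lem:chainhull} delivers $N_i$, together with the two bitangent contact points on $L_{i+1}$, in $O(\log n)$ time; splitting $L_{i+1}$ at those contacts and joining the outer piece to $T_{i-1}$ to form $N_i$ uses $O(\log n)$ BST operations.

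To justify correctness I would check two things. First, $N_i$ really is the $i$-th layer of $\onion(P\setminus A_1)$: after peeling $N_1,\ldots,N_{i-1}$ the remaining points are $T_{i-1}\cup L_{i+1}\cup\cdots\cup L_k$, and because $L_{i+2},\ldots,L_k$ are nested inside $L_{i+1}$, the outer hull of this set coincides with $\ch(T_{i-1}\cup L_{i+1})$. Second, the entire chain $T_{i-1}$ lies on $N_i$, so that $T_i$ is simply the sub-arc of $L_{i+1}$ interior to $N_i$ (and in particular is itself a convex chain): the endpoints of $T_{i-1}$ are vertices of $L_i$ (they are $p,q$ when $i=1$, and bitangent contacts from the previous step when $i\geq 2$), hence extreme in $T_{i-1}\cup L_{i+1}\subseteq \ch(L_i)$; the non-crossing argument from the proof of Lemma~\ref{lem:chainhull} then shows the portion of $T_{i-1}$ on $N_i$ is a single connected subchain, and since it contains both endpoints of $T_{i-1}$, it must be all of $T_{i-1}$.

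Since Observation~\ref{obs:k} bounds the number of layers of $\onion(P\setminus A_1)$ by $k$, the loop halts after at most $k$ iterations, yielding total time $O(k\log n)$.

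The main detail to handle carefully is the degenerate case $T_i = \emptyset$, i.e., $L_{i+1}$ is entirely absorbed into $N_i$. In that situation, the remaining new layers equal $L_{i+2}, L_{i+3}, \ldots, L_k$ unchanged, so only pointer relinking is needed and the $O(k\log n)$ bound is preserved. I would also confirm that Lemma~\ref{lem:chainhull} is applicable at every step, which follows because $T_{i-1}\subseteq L_i$ and $L_{i+1}\subseteq\ch(L_i)$ guarantee that $T_{i-1}$ and $L_{i+1}$ do not cross.
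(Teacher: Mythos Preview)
Your proposal is correct and follows essentially the same approach as the paper: both peel off the new layers one at a time by taking the convex hull of the leftover chain from the current original layer together with the next original layer, using Lemma~\ref{lem:chainhull} for each step. The paper phrases this as a recursion (``remove the absorbed arc from $B$ and recurse''), while you unroll it into an explicit loop with the invariant $T_{i-1}\subseteq L_i$; you also spell out the justification that the entire leftover chain $T_{i-1}$ appears on $N_i$ and that the two chains are non-crossing, points the paper leaves implicit.
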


\begin{proof}
  The points $p$ and $q$ partition  $A$ into two pieces, $A_1$ and $A_2$. 
  Let $B$ be the second layer of $\onion(P)$.
  The outer layer of $\onion (P \setminus A_1)$ is the convex hull of 
  $P \setminus A_1$, 
  i.e., the convex hull of $A_2$ and $B$. 
  By Lemma~\ref {lem:chainhull}, we can find it in $O(\log n)$ time.
  Let $p', q' \in P$ be the points on $B$ where the outer 
  layer of $\onion(P \setminus A_1)$ connects.
  We remove the part between $p'$ and $q'$ from $B$, and
  use recursion to compute the remaining layers of $\onion(P \setminus A_1)$ 
  in $O ((k-1) \log n)$ time; see
  Figure~\ref{fig:union-half-eaten+union-cascaded}.
\end{proof}

\tweeplaatjes{union-half-eaten}{union-cascaded} 
{ (a) A half-eaten onion; (b) the restored 
onion.}

\noindent
We conclude with the main theorem of this section:

\begin{theorem}\label{thm:ounion}
  Let $P$ and $Q$ be two planar point sets of total size $n$.
  Suppose that $\onion(P)$ and $\onion(Q)$ are disjoint.
  We can find the onion $\onion (P \cup Q)$ in $O (k^2 \log n)$ 
  time,
  where $k$ is the resulting number of layers.
\end{theorem}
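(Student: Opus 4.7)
The plan is to construct $\onion(P \cup Q)$ from the outside in, one layer per iteration, repeatedly combining Lemma~\ref{lem:chainhull} to produce the next layer and Lemma~\ref{lem:restore} to update the two input onions. Throughout, I maintain the invariant that the current remaining portions $P' \subseteq P$ and $Q' \subseteq Q$ still have disjoint convex hulls, which is immediate because $\ch(P') \subseteq \ch(P)$ and $\ch(Q') \subseteq \ch(Q)$.

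At iteration $i$ (for $i = 1, \ldots, k$), the $i$-th layer of $\onion(P \cup Q)$ equals $\ch(P' \cup Q')$, which in turn equals the convex hull of the outer layers of $\onion(P')$ and $\onion(Q')$. Since those outer layers are non-crossing convex closed curves by the invariant, Lemma~\ref{lem:chainhull} delivers this hull in $O(\log n)$ time, and along the way identifies the contiguous arcs $A_1^{P'}$ and $A_1^{Q'}$ of the two outer layers that appear on it. I then invoke Lemma~\ref{lem:restore} twice, to obtain $\onion(P' \setminus A_1^{P'})$ and $\onion(Q' \setminus A_1^{Q'})$, and update $P'$, $Q'$ accordingly. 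By Observation~\ref{obs:k}, neither remaining onion has more than $k-i+1$ layers, so each restore costs $O((k-i+1) \log n)$ time. Summing over the $k$ iterations gives $\sum_{i=1}^{k} O((k-i+1)\log n) = O(k^2 \log n)$ total. If one of $P'$ or $Q'$ empties first, the remaining layers are simply inherited from the other onion.

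The main obstacle is verifying that the preconditions of Lemmas~\ref{lem:chainhull} and~\ref{lem:restore} are maintained at every iteration: the two outer layers must never cross, and the vertices pulled off on one side must form a single contiguous arc of the current outer layer, so that Lemma~\ref{lem:restore} applies verbatim. Both properties follow from the subset invariant and the disjointness of $\ch(P)$ and $\ch(Q)$; since no portion of one chain enters the convex hull of the other, the part of either outer layer that appears on $\ch(P' \cup Q')$ is necessarily connected, and the bitangent structure produced by Lemma~\ref{lem:chainhull} identifies its endpoints. With these invariants in hand, the recursion and the complexity analysis above are routine.
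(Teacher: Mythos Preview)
Your proposal is correct and follows essentially the same approach as the paper: peel one layer at a time using Lemma~\ref{lem:chainhull}, then repair each of the two remaining onions with Lemma~\ref{lem:restore}, and iterate (the paper phrases this as ``the result follows by induction''). Your explicit invariant that $\ch(P') \subseteq \ch(P)$ and $\ch(Q') \subseteq \ch(Q)$, and your slightly sharper per-iteration bound of $O((k-i+1)\log n)$ via Observation~\ref{obs:k}, are welcome additions but do not change the overall argument or the final $O(k^2 \log n)$ bound.
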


\begin{proof}
  By Observation~\ref{obs:k}, $\onion(P)$ and $\onion(Q)$ each have at most 
  $k$ layers.
  We use Lemma~\ref{lem:chainhull} to find $\ch(P \cup Q)$ 
  in $O(\log n)$ time.
  By Lemma~\ref{lem:restore}, the remainders of 
  $\onion(P)$ and $\onion(Q)$ can be restored to 
  proper onions in $O(k \log n)$ time.
  The result follows by induction.
\end{proof}

\subsection{Space Decomposition Trees}

We now describe how to preprocess the disks in $\D$
for fast divide-and-conquer. 
A \emph{space decomposition tree} (SDT) $T$ 
is a rooted binary tree where each node $v$ is associated with a planar region
$R_v$. The root corresponds to all of $\R^2$; for each leaf 
$v$ of $T$, the region $R_v$ intersects only a constant number of
disks in $\D$.
Furthermore, each inner node $v$ in $T$ is associated with a directed line
$\ell_v$, so that if $u$ is the left child and $w$ the right
child of $v$, then $R_u \eqdef R_v \cap \ell_v^+$ and 
$R_w \eqdef R_v \cap \ell_v^-$. Here, $\ell_v^+$ is the 
halfplane to the left of $\ell_v$ and $\ell_v^-$ the halfplane
to the right of $\ell_v$; see Figure~\ref{fig:dec-tree}.

Let $\alpha, \beta \in (0,1)$, and let $T$ be an SDT.
For a node $v$ of $T$, let $d_v$ denote the number of disks in $\D$ that
intersect $R_v$. We call $T$ an $(\alpha, \beta)$-SDT
for $\D$ if for every inner node $v$ we have that
(i) the line $\ell_v$ intersects at most $d_v^\beta$ disks that intersect
$R_v$; and (ii) $d_u, d_w \leq \alpha d_v$, where $u$ and $w$ are the children
of $v$.

\eenplaatje{dec-tree}{A space decomposition tree for
$21$ unit disks.}

\begin{lemma}\label{lem:tree_complexity}
Let $T$ be an $(\alpha, \beta)$-SDT. The tree $T$ has height $O(\log n)$ and
$O(n)$ nodes.
Furthermore, $\sum_{v\in T} d_v = O(n \log n)$.
\end{lemma}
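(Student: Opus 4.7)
The plan is to prove the three claims in order, each building on the previous. The \emph{height} bound follows immediately from condition~(ii): along any root-to-leaf path, $d_v$ shrinks by a factor of at least $1/\alpha$ per level, and since $d_{\text{root}} \le n$ and leaves have $d_v = O(1)$, the depth is at most $\log_{1/\alpha}(n) + O(1) = O(\log n)$.

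For the sum $\sum_v d_v = O(n\log n)$, I would induct on $n$ with the ansatz $\sum_v d_v \le Cn\log n$ for a large constant $C = C(\alpha,\beta)$. With children $u, w$ of the root of sizes $m, m'$, condition~(ii) gives $m, m' \le \alpha n$ and condition~(i) yields $m + m' \le n + n^\beta$, since a disk in $R_{\text{root}}$ appears in both children only if cut by $\ell_{\text{root}}$. Applying the hypothesis and using monotonicity of $x\log x$,
\[
  \sum_v d_v \;\le\; n + C(m+m')\log(\alpha n) \;\le\; Cn\log n + n - Cn\log(1/\alpha) + O(n^\beta\log n),
\]
which is at most $Cn\log n$ once $C$ is taken large enough that $C\log(1/\alpha) > 1$, with room to absorb the lower-order term; this is possible because $\alpha < 1$.

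For the $O(n)$ \emph{node count}, since the tree is binary it suffices to bound the number $L$ of leaves. The key identity
\[
  \sum_{v\text{ leaf}} d_v \;=\; n \;+\; \sum_{v\text{ inner}} \bigl(\text{disks in } R_v \text{ cut by } \ell_v\bigr)
\]
holds because each disk lies in a single leaf unless split by an ancestral cut, and each such cut adds exactly one extra leaf copy; combined with condition~(i), the right-hand side is at most $n + \sum_v d_v^\beta$. After pruning any leaf with $d_v = 0$ (at most a constant-factor loss via sibling pairing), every leaf contributes at least one to the left side, so $L \le n + \sum_v d_v^\beta$. The task thus reduces to proving $\sum_v d_v^\beta = O(n)$, which I would attack level by level: at depth $i$, condition~(ii) gives $d_v \le \alpha^i n$, hence $d_v^\beta \le \alpha^{i\beta} n^\beta$, and combining this with a per-level node count and summing the resulting geometric series over the $O(\log n)$ depths should yield the desired linear bound. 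This last estimate is the \emph{main obstacle}: the trivial bound $\sum_v d_v^\beta \le \sum_v d_v = O(n\log n)$ loses a full $\log n$ factor, so one must exploit the balanced decay from~(ii) and the cut bound from~(i) jointly across all depths to collapse the total to $O(n)$.
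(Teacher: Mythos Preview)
Your height argument is fine and matches the paper. Your induction for $\sum_v d_v = O(n\log n)$ is also correct and is in fact cleaner than the paper's route, which obtains this bound only as a corollary of a more elaborate analysis.

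The gap is exactly where you flag it. Reducing the leaf count to $\sum_v d_v^\beta = O(n)$ via the telescoping identity is valid, but your depth-by-depth attack does not close. At depth $i$ you only have $d_v \le \alpha^i n$ and the trivial bound of $2^i$ nodes; this yields $\sum_i 2^i(\alpha^i n)^\beta = n^\beta\sum_i (2\alpha^\beta)^i$, which at the final depth $h\approx\log_{1/\alpha}n$ contributes $n^{1/\log_2(1/\alpha)}$. That exponent exceeds~$1$ precisely when $\alpha>1/2$, and the paper explicitly needs $\alpha=9/10+\eps$ in Section~4. Trying to repair this with the lower bound $d_v\ge(1-\alpha)^i n$ (which does follow from $d_u+d_w\ge d_v$ together with~(ii)) runs into the same wall: the resulting product $\prod_i\bigl(1+((1-\alpha)^i n)^{\beta-1}\bigr)$ diverges for $\alpha>1/2$. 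The obstruction is circular---a useful per-level node count would already require the lemma.

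The paper sidesteps this by stratifying by the \emph{value} of $d_v$ rather than by depth. Set $V_j=\{v:d_v\in[2^j,2^{j+1})\}$ and let $\widetilde V_j\subseteq V_j$ be the nodes whose parent lies outside $V_j$. Because $d_v$ drops by a factor $\alpha$ at each step, $V_j$ induces a forest of constant-height trees rooted in~$\widetilde V_j$. Inside such a tree every cut adds at most $2^{(j+1)\beta}$ while the root already carries weight at least $2^j$, so passing through one stratum multiplies the total weight by at most $1+c\,2^{j(\beta-1)}$. Telescoping from the root gives $\sum_{v\in\widetilde V_j}d_v\le n\prod_{i\ge j}(1+c\,2^{i(\beta-1)})=O(n)$, whence $|\widetilde V_j|=O(n/2^j)$ and $|V_j|=O(n/2^j)$. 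Summing over $j$ yields the $O(n)$ node count, and indeed also $\sum_v d_v^\beta=O(n)$. The crucial difference from your plan is that grouping by weight makes the cut contribution a fixed \emph{fraction} of the current weight, uniformly in $\alpha$; grouping by depth cannot arrange this once $\alpha>1/2$.
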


\begin{proof}
The fact that $T$ has height $O(\log n)$ is immediate from property
(ii) of an $(\alpha, \beta)$-SDT.
For $i = 0, \dots, \log n$, let 
$V_i \eqdef \{v \in T \mid d_v \in [2^i, 2^{i+1})\}$, the set of nodes
whose regions intersect between $2^i$ and $2^{i+1}$ disks.
Note that the sets $V_i$ constitute a partition of the nodes.
Let $\widetilde{V_i} \subseteq V_i$ be the nodes in $V_i$ whose parent
is not in $V_i$. By property (ii) again, the $d_v$ 
along any root-leaf path in $T$ are monotonically decreasing, so the nodes in 
$\widetilde{V_i}$ are unrelated (i.e., no node in $\widetilde{V_i}$ is 
an ancestor or descendant of another node in 
$\widetilde{V_i}$). 
Furthermore, the nodes in $V_i$ induce in $T$ a forest $F_i$ such 
that each tree in $F_i$ has a root from $\widetilde{V_i}$ and 
constant height (depending on $\alpha$).

Let $D_i \eqdef \sum_{v \in \widetilde{V_i}} d_v$.
We claim that for $i = 0, \dots, \log n $, we have  
\begin{equation}\label{equ:boundDi}
D_i \leq n \prod_{j=i}^{\log n}\bigl(1+c2^{j(\beta-1)}\bigr), 
\end{equation}
for some large enough constant $c$.
Indeed, consider a node $v \in \widetilde{V}_j$.
As noted above, $v$ is the root of a tree $F_v$ of constant height in the 
forest induced by $V_j$.
By property (i), any node  $u$ in this subtree adds at most 
$d_u^\beta < 2^{(j+1)\beta}$ additional disk intersections (i.e., 
$d_a +d_b \leq d_u + 2^{(j+1)\beta}$, where $a$, $b$ are the children of
$u$).
Since 
$F_v$ has
constant size, the total increase in disk intersections in $F_v$
is thus at most $c'2^{(j+1)\beta}$, for some constant $c'$. 
Since $d_v \geq 2^j$, it follows that the number of disk intersections
increases multiplicatively by a factor of at most $1+c'2^{(j+1)\beta}/2^j \leq
1+c2^{j(\beta-1)}$, for some constant $c$.
The trees $F_v$ partition $T$ and the root intersects
$n$ disks, so for the nodes in $\widetilde{V_i}$, the total number of 
disk intersections
has increased by a factor of at most 
$\prod_{j=i}^{\log n}\bigl(1+c2^{j(\beta-1)}\bigr)$, giving 
(\ref{equ:boundDi}). The product in (\ref{equ:boundDi}) is easily estimated:
\[
D_i \leq n \prod_{j=i}^{\log n}(1 + c2^{j(\beta-1)})
\leq n e^{\sum_{j=i}^{\log n} c2^{j(\beta-1)}}
= n e^{O(1)} = O(n),
\]
since $\beta < 1$. Hence, each set $\widetilde{V_i}$ has at most
$O(n/2^i)$ nodes for $i = 1, \dots, \log n$. The total size of all 
$\widetilde{V_i}$ is $O(n)$.
Since each $v \in V_i$  lies in a constant size subtree rooted at 
a $w \in \widetilde{V_i}$, it follows that $T$ has $O(n)$ nodes.
For the same reason, we get that
$\sum_{v\in T} d_v = O(n \log n)$.
\end{proof}

Now there are several ways to obtain an $(\alpha, \beta)$-SDT 
for $\D$.
A very simple construction is based on the following lemma, which 
is an algorithmic version of a result by 
Alon~\etal~\cite[Theorem~1.2]{akp-cddsl-89}. 
See Section~\ref{sec:derandomize} for alternative approaches.

\begin{lemma}\label{lem:line-separator}
There exists a constant $c \geq 0$, so that for any set $\D$ of 
$m$ congruent nonoverlapping disks in the plane, there is a 
line $\ell$ with at least $m / 2 - c \sqrt {m \log m}$ disks completely 
to each side of it. We can find $\ell$ in 
$O(m)$ expected time.
\end{lemma}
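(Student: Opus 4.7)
The statement has two ingredients: an existence claim for a well-separating line, and an algorithm that finds one in expected linear time. I would handle these in turn.

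For existence, fix a direction $\theta \in [0,\pi)$ and let $\ell_\theta$ be the line perpendicular to $\theta$ positioned through the median of the $\theta$-projections of the disk centers, so that $\lfloor m/2 \rfloor$ centers lie strictly on each side. A disk is crossed by $\ell_\theta$ iff its center lies in the width-$2$ strip $S_\theta$ around $\ell_\theta$. Because the disks are disjoint and congruent of radius $1$, the centers are at pairwise distance at least $2$, so a standard packing argument bounds $|S_\theta \cap \{c_1,\dots,c_m\}|$ by $O(L_\theta)$, where $L_\theta$ is the length of the intersection of $S_\theta$ with the support of the center set. Applying the argument of Alon, Katchalski and Pulleyblank, which combines averaging $L_\theta$ over $\theta$ with a dyadic decomposition of the center set by local density, yields a direction $\theta^\ast$ for which $\ell_{\theta^\ast}$ crosses at most $c\sqrt{m \log m}$ disks, leaving at least $m/2 - c\sqrt{m\log m}$ disks strictly on each side, as required.

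For the algorithm, the plan is to sample a constant number of directions uniformly at random. Given a sampled $\theta$, compute the median of the projected centers in $O(m)$ time by linear-time selection, form $\ell_\theta$, and count the disks it crosses by testing each center's distance to $\ell_\theta$ in another $O(m)$ time; return the best line encountered. The same averaging that drives the existence proof shows that the expected crossing count over random $\theta$ is $O(\sqrt{m \log m})$, so by Markov's inequality a constant fraction of directions produce a line meeting the bound. Hence an expected $O(1)$ trials suffice, and the overall expected running time is $O(m)$.

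The main obstacle is the existence estimate itself. A naive averaging of $L_\theta$ over directions only controls the perimeter of the convex hull of the centers, which can be as large as $\Theta(m)$ for highly elongated configurations; extracting the $\sqrt{m\log m}$ bound requires the more delicate dyadic argument, which accounts for the non-uniform density of centers across the support. Once this existence bound is in place, the algorithmic step and the Markov amplification are routine.
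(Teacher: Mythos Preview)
Your high-level strategy---pick a random direction, take the median line perpendicular to it, count the disks it crosses, and repeat until the count is small enough---is exactly what the paper does. The gap is in the existence argument you invoke and in how you link it to the sampling step.

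Alon, Katchalski and Pulleyblank do \emph{not} argue via a ``dyadic decomposition of the center set by local density''; that description does not match their proof and, as stated, is too vague to carry the bound. Their argument is combinatorial and inherently discrete: fix $r=\Theta(\sqrt{m/\log m})$ evenly spaced directions and let $\ell_1,\dots,\ell_r$ be the corresponding median lines. If a disk is crossed by both $\ell_i$ and $\ell_j$, its center lies in the intersection of two width-$2$ strips, a parallelogram of area $O(1/\sin\theta_{ij})$, so by packing only $O(1/\sin\theta_{ij})$ centers fit there. Summing shows that each $\ell_i$ shares at most $O(r\log r)$ crossed disks with the other lines; hence the total number of (line, disk) incidences is at most $m+O(r^2\log r)$, the average $\ell_i$ crosses at most $m/r+O(r\log r)=O(\sqrt{m\log m})$ disks, and at least half of the $r$ discrete directions are good. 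Accordingly, the paper samples the direction uniformly from this \emph{finite} set of size $r/2$, which directly yields success probability $\ge 1/2$ per trial. Your plan instead samples $\theta$ uniformly from $[0,\pi)$ and claims the expected crossing count is $O(\sqrt{m\log m})$ ``by the same averaging,'' but the averaging above is over the discrete set, not over $[0,\pi)$. To salvage continuous sampling you would still need the discrete argument plus the extra observation that it holds for every rotational offset of the $r$ directions, then integrate over that offset. The direct route is simply to sample from the discrete set, as the paper does.
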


\begin{proof}
Our proof closely follows Alon~\etal~\cite[Section~2]{akp-cddsl-89}.
Set $r \eqdef \lfloor \sqrt{m/\log m}\rfloor$,
and pick a random integer $z$ between $1$ and $r/2$. 
Find a line $\ell$ whose angle with the $x$-axis is $(z/r)\pi$
and that has $\lfloor m/2 \rfloor $ disk centers on each side.
Given $z$, we can find $\ell$ in $O(m)$ time by a median computation.
The proof by Alon~\etal\ implies that with probability at least $1/2$ over the
choice of $z$, the line $\ell$ intersects at most $c\sqrt{m\log m}$ disks in
$\D$, for some constant $c \geq 0$.
Thus,  
we need two tries in expectation
to find a good line $\ell$. The 
expected running time is $O(m)$. 
\end{proof}

To obtain a $(1/2 + \eps, 1/2 + \eps)$-SDT $T$ for $\D$,  we apply
Lemma~\ref{lem:line-separator} recursively until the region for
each node intersects only a constant number of disks.
Since the expected running time per node is linear
in the number of intersected disks, 
Lemma~\ref{lem:tree_complexity} shows that the total expected running time
is $O(n \log n)$.

By Lemma~\ref{lem:tree_complexity}, the leaves of $T$ induce a planar
subdivision $G_T$ with $O(n)$ faces. We add a large enough
bounding box to $G_T$ and triangulate the resulting graph.
Since $G_T$ is planar, the triangulation has complexity
$O(n)$ and can be computed in the same time (no need for heavy
machinery---all faces of $G_T$ are convex). With each disk
in $\D$, we store the list of triangles that intersect it
(recall that each triangle intersects a constant number of  disks). 
This again takes $O(n)$ time and space.
We conclude with the main theorem of this section:

\begin{theorem}\label{thm:sdt1}
  Let $\D$ be a set of $n$ disjoint unit disks in $\R^2$.
  In $O(n \log n)$ expected time, we can construct
  an $(1/2 + \eps, 1/2 + \eps)$ space decompositon tree $T$ for $\D$. 
  Furthermore, for each disk $D \in \D$, we have
  a list of triangles $T_D$ that cover the leaf
  regions of $T$ that intersect $D$.\qed
\end{theorem}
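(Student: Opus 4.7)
The plan is to build the tree top-down by invoking Lemma~\ref{lem:line-separator} at every node. At the root we have $d_{\text{root}} = n$, and for an internal node $v$ with $d_v$ disks meeting $R_v$ we run the lemma on those disks in expected $O(d_v)$ time to obtain $\ell_v$. The lemma guarantees that at least $d_v/2 - c\sqrt{d_v \log d_v}$ disks lie completely on each side, hence $d_u, d_w \le d_v/2 + c\sqrt{d_v \log d_v}$ for the two children, and that $\ell_v$ meets at most $c\sqrt{d_v\log d_v}$ disks. For any $\eps > 0$ both $c\sqrt{d_v\log d_v} \le d_v^{1/2+\eps}$ and $d_v/2 + c\sqrt{d_v \log d_v} \le (1/2+\eps)d_v$ hold once $d_v$ exceeds a constant threshold; I would simply halt the recursion at that threshold, so every leaf meets $O(1)$ disks. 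Properties (i) and (ii) of an $(1/2+\eps,1/2+\eps)$-SDT then follow by construction.

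For the running time, the construction at each node $v$ costs expected $O(d_v)$, and since this bound is independent of the random choices once $d_v$ is fixed, linearity of expectation gives total expected time $O\bigl(\sum_{v \in T} d_v\bigr)$. The deterministic bound $\sum_{v\in T} d_v = O(n\log n)$ from Lemma~\ref{lem:tree_complexity} holds for every $(1/2+\eps,1/2+\eps)$-SDT, so the expected total work for building $T$ is $O(n\log n)$. Similarly, $T$ has $O(n)$ nodes and height $O(\log n)$ by the same lemma.

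Next I would convert $T$ into the ancillary structures. Take the planar subdivision $G_T$ formed by the leaf regions inside a bounding box; every face is convex (an intersection of halfplanes), so $G_T$ can be triangulated in $O(n)$ time by ear-cutting each face independently. To produce the list $T_D$ for each disk $D \in \D$, I would piggyback on the recursive construction: at every node $v$, maintain with each disk $D$ intersecting $R_v$ a pointer to $R_v$; when $v$ is split by $\ell_v$, distribute $D$ to whichever child's region it still intersects (possibly both, if $\ell_v$ cuts $D$). When the recursion terminates at a leaf, $D$ knows the $O(1)$ leaf regions containing it; concatenating the triangles of those regions yields $T_D$. Since each disk appears in $O(1)$ leaves (leaves contain $O(1)$ disks, and a disk meeting $k$ leaves is charged to each of them, giving a total that is $O(n)$), this phase costs $O(n)$.

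The only subtle point is the expectation analysis: the shape of $T$ is itself random, so $\sum_{v\in T} d_v$ is a random variable. The key observation that resolves this is that Lemma~\ref{lem:tree_complexity} bounds this sum by $O(n\log n)$ \emph{for every realization} of $T$, because the bound follows purely from the $(\alpha,\beta)$-SDT properties, which hold deterministically once the line-separator succeeds at each node. Thus the expected total running time is bounded by $O(n\log n)$ without any further concentration argument, and the theorem follows.
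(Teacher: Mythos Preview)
Your argument is correct and follows essentially the same route as the paper: recursively apply Lemma~\ref{lem:line-separator}, invoke Lemma~\ref{lem:tree_complexity} for the $O(n\log n)$ bound on $\sum_v d_v$, and then triangulate the resulting convex subdivision. Your explicit treatment of the expectation subtlety (that $\sum_v d_v$ is bounded \emph{deterministically} once the SDT properties hold) is a nice touch the paper glosses over; the only slip is the claim that ``each disk appears in $O(1)$ leaves''---this need not hold per disk, but your charging argument already gives the correct $O(n)$ bound on the \emph{total} number of disk--leaf incidences, which is all that is needed.
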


\subsection{Processing a Precise Input}

Suppose we have an $(\alpha,\beta)$-SDT together with a point location
structure as in Theorem~\ref{thm:sdt1}.
Let $P$ be a sample
from $\D$. Suppose first that we know $k$, the number of 
layers in $\onion(P)$.
For each input point $p_i$, let $D_i \in \D$ be the corresponding disk. 
We check all triangles in $T_{D_i}$,
until we find the one that contains $p_i$.
Since there are $O(n)$ triangles, and each one intersects $O(1)$
disks, this takes $O(n)$ total time
for all points in $P$.
Afterwards, we know for each point in $P$ the leaf of $T$ that contains it.

For each node $v$ of $T$, let $n_v$ be the number of points in the subtree
rooted at $v$. We can compute the $n_v$'s in total time $O(n)$ by a postorder
traversal of $T$.
The \emph{upper tree} $T_u$ of $T$ consists of all nodes $v$ with
$n_v \geq k^2$.
Each leaf of $T_u$ corresponds to a subset of $P$ with
$O(k^2)$ points. For each such subset, we use Chazelle's 
algorithm~\cite{c-clps-85}
to find its onion decomposition in $O(k^2 \log k)$ time. 
Since the subsets are disjoint, this takes $O(n \log k)$ total time.
Now, in order to obtain $\onion(P)$, we perform a postorder traversal of 
$T_u$, using Theorem~\ref{thm:ounion} in each node to unite the onions of 
its children. This gives $\onion(P)$ at the root.

The time for the onion union at a node $v$ is
$O(k^2 \log n_v)$. 
We claim that for $i = 2\log k, \dots, \log n$, the upper tree
$T_u$ contains at most $O(n/2^i)$ nodes $v$ with $n_v \in [2^i, 2^{i+1})$.
Given the claim,  the 
total work is proportional to
\begin{align*}
 & \sum_{v \in T_u} k^2 \log n_v 
 \leq \sum_{i = 2\log k}^{\log n} \frac{n}{2^i} k^2 (i+1) 
 =  nk^2 \sum_{i = 2\log k}^{\log n } \frac{i+1}{2^i}  
  = O(n \log k),
\end{align*}
since the series $\sum_{i=2\log k}^{\log n} (i+1)/2^i$ is dominated by the
first term $(\log k)/k^2$. It remains to prove the claim. Fix 
$i \in \{2\log k, \dots, \log n\}$ and let $V_i$ be the nodes
in $T_u$ with $n_v \in [2_i, 2^{i+1})$, whose parents have
$n_v \geq 2^{i+1}$. Since the nodes in $V_i$ represent disjoint subsets
of $P$, we have $|V_i| \leq n/2^i$. 
Furthermore, by property (i) of an $(\alpha, \beta)$-SDT ,
both children $w_1, w_2$ for every node $v \in T_u$ 
have $n_{w_1}, n_{w_2} \leq \alpha n_v$, so that after $O(1)$ levels,
all descendants $w$ of $v \in V$ have $n_w < 2^i$. The claim follows.

So far, we have assumed that $k$ is given. Using standard exponential
search, this requirement can be removed. More precisely, 
for $i = 1, \dots, \log\log n$, set $k_i = 2^{2^i}$. Run the above algorithm 
for $k = k_0, k_1, \dots$. If the algorithm
succeeds, report the result. If not, abort as soon as it turns
out that an intermediate onion has more than $k_i$ layers and try
$k_{i+1}$. The total time is 
\begin{align*}
  \sum_{i=0}^{\log\log k} O(n 2^i) = O(n \log k), 
\end{align*}
as desired. This finally proves our main result.

\begin {theorem}
  Let $\D$ be a set of $n$ disjoint unit disks in $\R^2$.
  We can build a data structure that stores $\D$, of size $O (n)$, 
  in $O(n \log n)$ expected time, such that given a sample 
  $P$ of $\D$, we can compute 
  $\onion (P)$ in $O (n \log k)$ time, where $k$ is the number of layers in
  $\onion(P)$.\hfill$\Box$
\end {theorem}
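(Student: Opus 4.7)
The plan is to package together the three ingredients developed in the preceding subsections. For preprocessing, I would simply invoke Theorem~\ref{thm:sdt1}: construct a $(1/2+\eps,1/2+\eps)$-SDT $T$ for $\D$ in $O(n\log n)$ expected time and linear space, including the bounding triangulation of $G_T$ and, for every disk $D\in\D$, the list $T_D$ of triangles that cover the leaf regions of $T$ meeting $D$. That gives the claimed preprocessing bound directly.

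For the query, given a sample $P$, the first step is point location: for each $p_i\in P$, scan the constant-sized list $T_{D_i}$ to identify the leaf of $T$ containing $p_i$, and then compute the counts $n_v$ by a postorder traversal. Both tasks take $O(n)$ total time. Assuming momentarily that $k$ is known, I would define the upper tree $T_u$ as the nodes with $n_v\geq k^2$, run Chazelle's $O(m\log m)$ algorithm on each of its leaves (whose subsets are disjoint and have size $O(k^2)$, so this costs $O(n\log k)$ overall), and then walk $T_u$ bottom-up, applying Theorem~\ref{thm:ounion} at each internal node to merge the two children's onions. The geometric separation needed to legitimately invoke Theorem~\ref{thm:ounion} is guaranteed by the SDT splitting line $\ell_v$: the two sibling onions live in disjoint halfplanes and hence have disjoint convex hulls.

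The analysis of this merge phase is essentially already done in the subsection preceding the theorem: each merge at $v$ costs $O(k^2\log n_v)$, and property (ii) of the SDT plus the fact that the $n_v$ halve geometrically bounds the number of nodes of $T_u$ with $n_v\in[2^i,2^{i+1})$ by $O(n/2^i)$. Summing from $i=2\log k$ to $\log n$ and noting that the series $\sum (i+1)/2^i$ is dominated by its first term yields the desired $O(n\log k)$ bound.

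The one step that requires a little care, and which I expect to be the mildly tricky point, is removing the assumption that $k$ is known in advance. I would use the standard doubly-exponential search already sketched in the paper: try $k=k_i=2^{2^i}$ for $i=0,1,2,\dots$, and in each attempt abort as soon as an intermediate onion of more than $k_i$ layers appears (which can only happen if the true number of layers exceeds $k_i$, by Observation~\ref{obs:k}). The cost of the $i$th try is $O(n\,2^i)$, so the geometric sum is dominated by the final successful attempt, giving $O(n\log k)$ in total. This completes the proof.
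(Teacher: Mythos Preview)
Your proposal is correct and follows exactly the approach the paper takes in the subsection preceding the theorem: build the SDT of Theorem~\ref{thm:sdt1}, locate the sample points via the triangle lists, form the upper tree $T_u$ at threshold $k^2$, run Chazelle on its leaves, merge bottom-up with Theorem~\ref{thm:ounion}, and finish with the doubly-exponential search over $k$. You even make explicit a point the paper leaves implicit, namely that the two child onions at an inner node are separated by the splitting line $\ell_v$ and therefore satisfy the disjointness hypothesis of Theorem~\ref{thm:ounion}.
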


\noindent\textbf{Remark.}
Using the same approach, without the exponential search, we can also 
compute the outermost $k$ layers of an onion with arbitrarily many layers 
in $O (n \log k)$ time, for any $k$.
In order to achieve this, we simply abort the union algorithm whenever 
$k$ layers have been found, and note that 
by Observation~\ref {obs:k},
the points in $P$ not on the 
outermost $k$ layers of $\onion (P)$ will never be part of the 
outermost $k$ layers of $\onion(Q)$ for any $Q \supset P$.

\section{Deterministic Preprocessing}\label{sec:derandomize}

We now present alternatives to Lemma~\ref{lem:line-separator}. 
First, we describe a very simple construction that
gives a deterministic way to build an $(9/10 + \eps, 1/2 + \eps)$-SDT
in $O(n \log n)$ time.

\begin{lemma}
Let $\D$ be a set of $m$ non-overlapping unit disks.
Suppose that the centers of $\D$ have been sorted in horizontal and vertical direction.
Then we can find in $O(m)$ time a (vertical or horizontal) line $\ell$, such that
$\ell$ intersects $O(\sqrt m)$ disks and such that $\ell$ has
at least $m/10$ disks from $\D$ completely to each side.
\end{lemma}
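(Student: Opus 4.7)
The plan is to reduce to one of the two axis directions via a packing argument, and then to find a sparse cutting line in that direction by averaging over $\Theta(\sqrt m)$ equally spaced candidates.

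First, I would identify the quantile intervals $I_x = [x_{m/10}, x_{9m/10}]$ and $I_y = [y_{m/10}, y_{9m/10}]$ of the sorted $x$- and $y$-coordinates of the centers; these are accessible in $O(1)$ time from the sorted inputs. I claim that for some absolute constant $c > 0$, either $|I_x| \geq c\sqrt{m}$ or $|I_y| \geq c\sqrt{m}$. Indeed, if both were smaller, then by inclusion--exclusion at least $6m/10$ centers would lie in the axis-aligned rectangle $I_x \times I_y$, so the corresponding pairwise disjoint unit disks would fit in a rectangle of area at most $(c\sqrt{m}+2)^2$. For $c < \sqrt{3\pi/5}$ and $m$ large, this is strictly less than $6m\pi/10$, contradicting disjointness. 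Without loss of generality, $|I_x| \geq c\sqrt{m}$.

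Next, I place $\lfloor \sqrt{m}\rfloor$ equally spaced vertical candidate lines inside the slightly shrunk interval $[x_{m/10}+1, x_{9m/10}-1]$; for $m$ sufficiently large this interval still has length $\Omega(\sqrt m)$, so consecutive candidates are at distance $\Omega(1)$. Each disk has diameter $2$, hence meets only $O(1)$ candidates, and the total number of (disk, candidate) incidences is $O(m)$. By the pigeonhole principle, some candidate line $\ell$ is met by only $O(\sqrt m)$ disks. Because $\ell$ sits at some $t \geq x_{m/10}+1$, every disk with center $x_i \leq x_{m/10}$ satisfies $x_i+1 \leq t$ and so lies (weakly) to the left of $\ell$; a symmetric statement holds on the right. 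Picking the outer-quantile thresholds a bit tighter than $1/10$---say at ranks $m/8$ and $7m/8$---leaves $m/8$ centers safely on each side, and the $O(\sqrt m)$ disks that $\ell$ actually intersects can be absorbed to yield at least $m/10$ disks \emph{completely} on each side for $m$ above an absolute threshold (with small $m$ handled by direct inspection). All steps---reading off the quantiles, generating the candidates, counting incidences by a single pass through the sorted centers, and taking the minimum---cost $O(m)$ in total.

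The main obstacle is juggling the constants so that three requirements hold simultaneously: (i) the packing argument forces $\Omega(\sqrt m)$ spread in at least one axis, (ii) averaging produces a candidate intersecting at most $O(\sqrt m)$ disks, and (iii) after discarding the disks the chosen line meets, at least $m/10$ remain strictly on each side. Widening the inner-quantile window slightly, as above, reconciles all three and delivers the lemma.
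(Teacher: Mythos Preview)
Your proof is correct and follows essentially the same approach as the paper: a packing argument shows that the middle quantiles of the centers span $\Omega(\sqrt m)$ in at least one axis direction, and then averaging over $\Theta(\sqrt m)$ well-spaced candidate lines in that direction yields one meeting only $O(\sqrt m)$ disks with enough disks fully on each side. The paper phrases the packing step via an inner/outer-disk partition (strip off the $m/10$ extreme disks in each of the four directions and take the bounding box of the remaining inner disks) rather than your quantile intervals plus inclusion--exclusion, and it is slightly less explicit than you are about absorbing the $O(\sqrt m)$ intersected disks to guarantee the strict $m/10$ count, but the arguments are otherwise identical.
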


\begin{proof}
Let $\D_l$, $\D_r$, $\D_t$, $\D_b$ be the $m/10$ left-, right-,
top-, and bottommost disks in $\D$, respectively.
We can find these disks in $O(m)$ time, since we know the horizontal
and vertical order of their centers. We call 
$\D_o \eqdef \D_l \cup \D_r \cup \D_t \cup \D_b$ the \emph{outer disks}, and
$\D_i \eqdef \D \setminus \D_o$ the \emph{inner disks}.

Let $R$ be the smallest axis-aligned rectangle that contains
all inner disks. Again, $R$ can be found in linear time. 
There are $\Omega(m)$ inner disks, and all
disks are disjoint, so the area of $R$ must be $\Omega(m)$.
Thus, $R$ has width or height $\Omega(\sqrt{m})$; assume 
wlog that it has width $\Omega(\sqrt{m})$. Let $R' \subseteq R$ be
the rectangle obtained by moving the left boundary of $R$ to the right
by two units, and the right boundary of $R$ to the left by two units.
The rectangle $R'$ still has width $\Omega(\sqrt{m})$, and it intersects
no disks from $\D_l \cup \D_r$.
There are $\Omega(\sqrt{m})$ vertical lines that intersect
$R'$ and that are spaced at least one unit apart.
Each such line has at least $m/10$ disks completely to each side, and
each disk is intersected by at most one line. 
Hence, there must be a line that intersects $O(\sqrt{m})$ disks, as claimed.
We can find such a line in $O(m)$ time by sweeping the disks from
left to right.
\end{proof}

The next lemma improves the constants of the previous construction. It
allows us to compute an $(1/2 + \eps, 5/6 + \eps)$-SDT
tree in deterministic time $O(n \log^2 n)$, but it requires comparatively heavy machinery.
\begin{lemma}\label{lem:sdt_matousek}
Let $\D$ be a set of $m$ congruent non-overlapping disks.
In deterministic time $O(m \log m)$, we can find a line $\ell$ such
that there are at least $m/2 - \Theta(m^{5/6})$ disks completely to each side
of $\ell$.
\end{lemma}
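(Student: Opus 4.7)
The plan is to derandomize the random angle sampling of Lemma~\ref{lem:line-separator} using Matou\v{s}ek's simplicial partition theorem, which is the ``heavy machinery'' alluded to.

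First, I would apply a deterministic simplicial partition to the $m$ disk centers with parameter $r \eqdef \lceil m^{1/3}\rceil$. This produces a partition of the centers into $r$ classes $C_1,\ldots,C_r$, each of size $O(m/r) = O(m^{2/3})$, together with covering triangles $\Delta_1,\ldots,\Delta_r$ with $C_i\subseteq\Delta_i$, such that every line crosses at most $c\sqrt{r} = O(m^{1/6})$ of the triangles. A deterministic implementation of the partition theorem runs in $O(m\log m)$ time.

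Next I would enlarge each triangle by $1$ unit to $\Delta_i^{+} \eqdef \Delta_i \oplus B(0,1)$. A unit disk is hit by a line $\ell$ precisely when its center lies in the slab of width $2$ about $\ell$, so $\ell$ can meet a disk from class $i$ only if it crosses $\Delta_i^{+}$. Because each class contains $\Omega(m^{2/3})$ pairwise disjoint unit disks, $\Delta_i$ has area $\Omega(m^{2/3})$ and cannot be arbitrarily thin, so inflation by a unit affects crossing counts only by a constant factor: every line crosses at most $O(\sqrt r)$ of the inflated triangles.

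I then take $\ell$ to be the median line in a fixed direction (say perpendicular to the $x$-axis), computable in $O(m)$ time. This $\ell$ has $\lfloor m/2\rfloor$ centers on each side, and by the crossing bound it passes within unit distance of the centers of at most
\[
  O(\sqrt r)\cdot\frac{m}{r} \;=\; O\!\left(\frac{m}{\sqrt r}\right) \;=\; O(m^{5/6})
\]
disks. Hence $\ell$ leaves at least $m/2 - \Theta(m^{5/6})$ disks strictly to each side, and the total running time is dominated by the $O(m\log m)$ partition construction.

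The main obstacle is controlling the inflation step: the partition theorem only bounds crossings of the original simplices, so a priori the inflated triangles might overlap heavily and drive up crossing counts. I expect to handle this by exploiting that the $\Delta_i$ are not too thin (since each contains $\Omega(m^{2/3})$ disjoint unit disks); alternatively, one can apply Matou\v{s}ek's partition theorem directly to the range space of slabs of width~$2$ around lines rather than to halfplanes, sidestepping the inflation at the cost of a slightly weaker but still $O(m^{5/6})$ crossing bound.
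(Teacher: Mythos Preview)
Your argument has a genuine gap at the inflation step, and the single fixed direction is the root cause. The claim that each $\Delta_i$ has area $\Omega(m^{2/3})$ is false: $\Delta_i$ only contains the \emph{centers} of the disks in class $i$, not the disks themselves, so $\Delta_i$ may be a segment of area zero (e.g., $m^{2/3}$ collinear centers spaced $2$ apart). More importantly, even knowing the line crossing number of the $\Delta_i$, you cannot bound how many \emph{inflated} triangles a fixed line meets: a line $\ell$ meets $\Delta_i^{+}$ iff the width-$2$ slab about $\ell$ meets $\Delta_i$, and while the two boundary lines of the slab together cross $O(\sqrt r)$ triangles, arbitrarily many triangles may be \emph{contained} in the slab. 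Concretely, place $m/2$ centers at $(i\eps,\,2i)$ for $i=1,\dots,m/2$ with $\eps\le 2/m$, and the remaining $m/2$ centers split evenly far to the left and right. The vertical median line then passes within distance~$1$ of all $m/2$ column centers and hence intersects $\Omega(m)$ disks; here $\Omega(m^{1/3})$ of the partition triangles sit inside the slab even though only $O(m^{1/6})$ are crossed by the line itself. Your fallback of applying the partition theorem to width-$2$ slabs runs into the same obstruction: Matou\v{s}ek's $O(\sqrt r)$ bound is for lines, and controlling the number of triangles \emph{contained} in a slab is precisely the missing piece.

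The paper's proof avoids this by \emph{not} committing to a single direction. It takes a simplicial $r$-partition with $r=m^{2/3}$, then tests $s=m^{1/6}$ evenly spaced directions. For each direction it finds a balanced line $\ell_i$ using the projected triangles; the Alon~\etal\ analysis~\cite{akp-cddsl-89} shows that at most $O(s\log s)$ disks can hit two of the $\ell_i$, so by counting some $\ell_i$ is exclusively hit by at most $m/s=m^{5/6}$ disks. An arrangement of the $s$ width-$2$ strips lets you identify this $\ell_i$ in $O(m)$ time. The trial over many directions---borrowed from Lemma~\ref{lem:line-separator} but derandomised via the partition---is the key idea your plan is missing; the simplicial partition alone, applied in one direction, cannot deliver the $m^{5/6}$ bound.
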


\begin{proof}
Let $X$ be a planar $n$-point set, and let $1 \leq r \leq n$ be a parameter.
A \emph{simplicial $r$-partition of $X$} is 
a sequence $\Delta_1, \dots, \Delta_a$ of 
$a = \Theta(r)$ triangles and a partition $X = X_1 \dot\cup \cdots \dot\cup X_a$ of $X$ 
into $a$ pieces such
that  (i) for $i = 1, \dots, a$, we have $X_i \subseteq \Delta_i$
and $|X_i| \in \{n/r, \dots, 2n/r\}$; and (ii) every line $\ell$ intersects 
$O(\sqrt{r})$ triangles $\Delta_i$.
Matou\v{s}ek showed that a simplicial $r$-partition exists for every 
planar $n$-point set and for every $r$. 
Furthermore, this partition can be found in
$O(n \log r)$ time (provided that $r \leq n^{1-\delta}$, for 
some $\delta > 0$)~\cite[Theorem~4.7]{Matousek92}.

Let $\gamma, \delta \in (0,1)$ be two constants to be determined later. Set 
$r \eqdef m^{\gamma}$. Let $Q$ be the set of centers of the disks in $\D$.
We compute a simplicial $r$-partition for $Q$ in $O(m \log m)$ time. 
Let $\Delta_1, \dots, \Delta_a$ be the resulting triangles and
$Q = Q_1 \dot\cup \cdots \dot\cup Q_a$ the partition of $Q$.
Set $s \eqdef m^{\delta}$, and for $i = 1, \dots, s$, let $\ell_i'$
be the line through the origin that  forms an angle $(i/2s)\pi$ with the 
positive $x$-axis.
Let $Y_i$ be the projection of the triangles
$\Delta_1, \dots, \Delta_a$ onto $\ell_i'$.
We interpret $Y_i$ as a set
of weighted intervals, where the weight of an interval is the size $|Q_j|$ of 
the associated point set for the corresponding triangle.
By the properties of the simplicial partition, the interval set $Y_i$ 
has \emph{depth} $O(\sqrt{r})$,
i.e., every point on $\ell_i'$ is covered by at most $O(\sqrt{r})$ intervals 
of $Y_i$.

Note that the sets $Y_i$ can be determined in $O(s r \log r) = 
O(m^{\gamma+\delta} \log m) = O(m)$ total time, for $\gamma, \delta$ small enough.
Now, for each  $Y_i$, we find a point $c_i$ on $\ell_i'$ that has intervals of
total weight $m/2 - O(\sqrt{r}(m/r)) = m/2 - O(m^{1-\gamma/2})$  completely to 
each side. Since the depth of
$Y_i$ is $O(\sqrt{r})$, we can find such a point in time $O(\log r)$ with
binary search, for a total of $O(s \log r) = O(m)$ time (it would even be
permissible to spend time $O(r)$ on each $Y_i$). Let $\ell_i$ be the line
perpendicular to $\ell_{i}'$ through $c_i$.

The analysis of Alon~\etal\ shows that for each $\ell_i$, there are at
most $O(s \log s)$ disks that intersect $\ell_i$ and at least one other line
$\ell_j$~\cite[Section~2]{akp-cddsl-89}. Thus, it suffices to focus on the 
disks in $\D$ that intersect
at most one line $\ell_i$. By simple counting, there is a line
$\ell_i$ that exclusively intersects at most $m/s = m^{1-\delta}$ disks. 
It remains to find such a line in $O(m)$ time. For this, we compute 
the arrangement $\mathcal{A}$ of the strips with width $2$ centered
around each $\ell_i$, together with an efficient point location 
structure. For each cell in the arrangement, we store whether it is covered
by $0$, $1$, or more strips. Using standard techniques, the construction 
takes $O(s^2) = O(m^{2\delta})$ time. 
We locate for each triangle $\Delta_i$ the cells of $\mathcal{A}$
that contain the vertices of $\Delta_i$. This needs 
$O(r \log s) = O(m^{\gamma}\log m)$ steps.
Since every line intersects at most $O(\sqrt{r}) = O(m^{\gamma/2})$ triangles,
we know that there are at most $O(sm^{\gamma/2}) = O(m^{\delta+\gamma/2})$ 
triangles that intersect a cell boundary of $\mathcal{A}$. We call these
triangles the \emph{bad} triangles.

For all other triangles $\Delta_i$, we know that the associated point set
$Q_i$ lies completely in one cell of $\mathcal{A}$. Let $\D_i$ be the 
set of corresponding disks. By using the information stored with the cells, we 
can now determine for each disk $D \in \D_i$ in
$O(1)$ time whether $D$ intersects exactly one line $\ell_i$.
Thus, we can determine in total time $O(m)$ for each line $\ell_i$ the total
number of disks that intersect only $\ell_i$ and whose center is not 
associated with
a bad triangle. Let $\ell$ be the line for which this number is minimum.

In total, it has taken us $O(m\log m)$ steps to find $\ell$. Let us
bound the number of disks that intersect $\ell$. First, we know that
there are at most 
$O(m^{\delta + \gamma/2}\cdot m^{1-\gamma}) = O(m^{1 + \delta - \gamma/2})$
disks whose centers lie in bad triangles. Then, there are at most
$O(m^{\delta}\log m)$ disks that intersect $\ell$ and at least one other line.
Finally, there are at most $m^{1-\delta}$ disks with a center in a good 
triangle that intersect only $\ell$. Thus, if we choose, say, 
$\delta = 1/6$  and $\gamma = 2/3$, 
then $\ell$ crosses at most $O(m^{5/6})$ disks in $\D$. Furthermore, 
by construction, $\ell$ has at least $m/2 - O(m^{2/3})$ disk centers on 
each side. The result follows.
\end{proof}

\noindent\textbf{Remark.} Actually, we can use the approach from 
Lemma~\ref{lem:sdt_matousek} to compute an $(1/2 + \eps, 5/6 + \eps)$-SDT in 
total deterministic time $O(m \log m)$. The bottleneck 
lies in finding the simplicial partition for $Q$. All other steps take $O(m)$ time.
However, when applying Lemma~\ref{lem:sdt_matousek} recursively, we do not need
to compute a simplicial partition from scratch. Instead, as in Matou\v{s}ek's paper,
we can recursively refine the existing partitions in linear 
time~\cite[Corollary~3.5]{Matousek92} (while duplicating the triangles for the
disks that are intersected by $\ell$). Thus, after spending $O(m \log m)$ time
on the simplicial partition for the root, we need only linear time per
node to find the dividing lines, for a total of $O(m \log m)$, by 
Lemma~\ref{lem:tree_complexity}.

\section{Lower Bounds}

\eenplaatje[width=\textwidth] {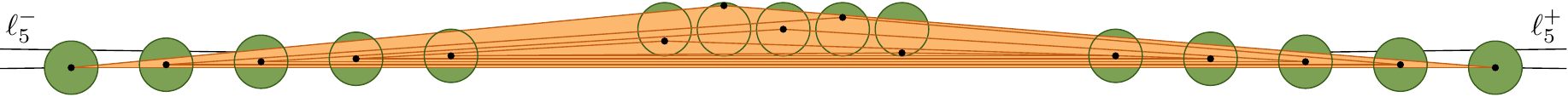} {The lower bound construction 
consists of $n/3$ unit disks centered on a horizontal line 
($5$ in the figure), and two groups of $n/3$ points sufficiently far 
to the left and to the right of the disks.
Distances not to scale.}
\eenplaatje {lowerbound-kgon} {$n/k$ copies of the construction on a 
regular $n/k$-gon. 
}

We now show that 
our algorithm is optimal in the decision tree model. 
The precise nature of the decisions does not matter, 
as long as each decision extracts only a constant number
of bits of information from the input.
We begin with a lower bound of $\Omega(n \log n)$ for $k = \Omega(n)$.
Let $n$ be a multiple of $3$, and consider the lines 
\begin{align*}
\ell_n^- : y = -1/2 -6/n -x/n^2 ;\quad 
\ell_n^+ : y = -1/2 - 6/n +x/n^2 .
\end{align*}
Let $\D_n$ consist 
of $n/3$ disks centered on the $x$-axis at $x$-coordinates 
between $-n/6$ and $n/6$;
a group of $n/3$ disks centered on $\ell_n^-$ at $x$-coordinates 
between $n^2$ and $n^2 + n/3$;
and a symmetric group of $n/3$ disks centered on 
$\ell_n^+$ at $x$-coordinates between $-n^2 - n/3$ and $-n^2$.
Figure~\ref {fig:lowerbound} shows $\D_{15}$.

\begin{lemma}\label{lem:permute}
Let $\pi$ be a permutation on $n/3$ elements. There is a sample $P$ 
of $\D_n$ such that $p_i$ (the point for the $i$th disk from the left in 
the main group) lies 
on layer $\pi(i)$ of $\onion(P)$.
\end{lemma}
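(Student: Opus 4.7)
The plan is to exhibit an explicit sample $P$ and show by induction on $i$ that layer $i$ of $\onion(P)$ is a triangle consisting of exactly three points: the $i$-th leftmost point of the left wing, the $i$-th rightmost point of the right wing, and $p_{\sigma(i)}$, where $\sigma = \pi^{-1}$. Since there are $k = n/3$ disks in each wing, this partitions all $n$ points into $k$ layers, and $p_j$ sits on layer $\pi(j)$ as required.

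For the wings, place each point at the centre of its disk, so that the left-wing points $L_1, \dots, L_k$ (indexed left to right) lie exactly on $\ell_n^+$ and the right-wing points $R_1, \dots, R_k$ (left to right) lie exactly on $\ell_n^-$. The slopes $\pm 1/n^2$ make the wings rise gently toward the centre, so $L_1$ is the lowest point of the left wing, $L_2$ becomes the lowest once $L_1$ is removed, and so on; symmetrically, $R_k, R_{k-1}, \dots$ are the successive lowest points of the right wing. For the main group, place $p_j$ at the $x$-coordinate of its disk centre and at $y$-coordinate $-\pi(j)/n$; since $\pi(j) \leq n/3$, this lies well inside the unit disk around $(x_j, 0)$.

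For the inductive step, assume that layers $1, \dots, i-1$ have already been peeled as claimed. The remaining points are $L_i, \dots, L_k$ on $\ell_n^+$, $R_1, \dots, R_{k-i+1}$ on $\ell_n^-$, and the main points $p_{\sigma(i)}, \dots, p_{\sigma(k)}$. I verify that their convex hull is the triangle $L_i\, p_{\sigma(i)}\, R_{k-i+1}$. For the lower edge $L_i R_{k-i+1}$: its endpoints have $y$-coordinates differing by only $\Theta(1/n^2)$, so the chord is almost horizontal, and the $1/n^2$ rise of $\ell_n^\pm$ between consecutive wing points places every other remaining wing point strictly above the chord, hence inside the triangle. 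For the upper ``tent'' $L_i\, p_{\sigma(i)}$ and $p_{\sigma(i)}\, R_{k-i+1}$: each segment has slope of order $1/n^2$, so as it sweeps across the main-group $x$-range of width $n/3$ its height drops from $y_{\sigma(i)}$ by at most $(3/(2n^2))(n/3) = 1/(2n)$. Since $p_{\sigma(i)}$ is the uniquely highest remaining main point and every other remaining $p_{\sigma(j)}$ lies at $y$-coordinate at least $1/n$ below it, this drop is too small to reach the other main points, and they all lie strictly inside the triangle.

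The main obstacle is balancing the two competing slack inequalities simultaneously at every layer. The constants $-1/2 - 6/n$ (intercept) and $\pm 1/n^2$ (slope) in the definitions of $\ell_n^\pm$ are calibrated precisely so that, on the one hand, the $1/n^2$ vertical rise between consecutive wing points dominates the negligible tilt of each layer's lower chord, and on the other hand the worst-case tent drop $1/(2n)$ across the main-group range is strictly smaller than the enforced gap $1/n$ between successive $y$-coordinates of main points. Once both inequalities are checked, the inductive step closes and peeling produces $p_{\sigma(i)}$ on layer $i$ for every $i$, i.e.\ $p_j$ on layer $\pi(j)$ for every $j$.
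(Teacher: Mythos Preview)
Your proof is correct and follows essentially the same approach as the paper: place the wing points at their disk centres and the main points at heights encoding $\pi$, then argue inductively that each convex hull is a triangle peeling off one extreme wing point from each side together with the currently highest remaining main point. The only differences are cosmetic---you use heights $-\pi(j)/n$ where the paper uses $\pi(i)\cdot 3/n - 1/2$---and your write-up is in fact more explicit than the paper's about why the triangle property persists at every layer.
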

\begin{proof}
Take $P$ as the $n/3$ centers of the disks in $\D$ on $\ell_n^-$,
the $n/3$ centers of the disks in $\D$ on $\ell_n^+$, and for each 
disk $D_i \in \D$ on the $x$-axis
the point $p_i = (i - n/6, \pi(i)\cdot 3/n - 1/2)$.
By construction, the outermost layer of $\onion (P)$ contains at 
least the leftmost point on $\ell_n^+$, the rightmost point on 
$\ell_n^-$, and the highest point (with $y$-coordinate $1/2$).
However, it does not contain any more points: the line segments connecting 
these three points have slope at most $2/n^2$. The second highest point 
lies $3/n$ lower, and at most $n/3$ further to the left or the right.
The lemma follows by induction. 
\end{proof}

There are $(n/3)! = 2^{\Theta(n \log n)}$ permutations $\pi$;
so any corresponding decision tree has height
$\Omega(n \log n)$.
We can strengthen the lower bound to 
$\Omega(n\log k)$ by taking $n/k$ copies of $\D_k$
and placing them on the sides of a regular 
$(n/k)$-gon, see Figure~\ref{fig:lowerbound-kgon}. 
By Lemma~\ref{lem:permute}, we can choose independently for
each side of the $(n/k)$-gon one of $(k/3)!$ permutations.
The onion depth will be $k/3$, and the number of permutations
is $((k/3)!)^{n/k} = 2^{\Theta(n \log k)}$.

\begin{theorem}\label{thm:lowerb}
  Let $k \in \N$ and $n \geq k$. There is a set
  $\D$ of $n$ disjoint unit disks in $\R^2$,
  such that any decision-based algorithm to compute $\onion (P)$ for a sample
  $P$ of $\D$,
  based only on prior knowledge of $\D$,
  takes $\Omega (n \log k)$ time in the worst case.
\end{theorem}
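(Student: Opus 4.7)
The plan is to apply a standard information-theoretic lower bound argument to the two constructions outlined just before the theorem, using Lemma~\ref{lem:permute} as the main combinatorial engine. First I would handle the base case $k = \Theta(n)$ by invoking Lemma~\ref{lem:permute} directly: that lemma produces, for each permutation $\pi$ on $n/3$ elements, a sample of $\D_n$ whose onion decomposition records $\pi$ (the $i$th middle disk receives a point on layer $\pi(i)$). Since different permutations yield combinatorially distinct onions, any decision tree that computes $\onion(P)$ must have at least $(n/3)! = 2^{\Theta(n \log n)}$ leaves; as each internal node extracts only $O(1)$ bits of information, the tree has depth $\Omega(n \log n)$.

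To get the stronger bound $\Omega(n \log k)$ for arbitrary $k$, I would use the construction of Figure~\ref{fig:lowerbound-kgon}: take $n/k$ rigid copies of $\D_k$, rotate and translate them so that one copy sits on each edge of a regular $(n/k)$-gon of sufficiently large radius $R$. By Lemma~\ref{lem:permute}, each copy can realize any of $(k/3)!$ distinct internal layer-orderings \emph{independently}, so provided the combined onion splits as the interleaving of the copies' onions, the total number of achievable onions is at least $\bigl((k/3)!\bigr)^{n/k} = 2^{\Theta(n \log k)}$. Again by the decision-tree counting argument, any such algorithm must make $\Omega(n \log k)$ comparisons in the worst case.

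The main obstacle is justifying the claimed decomposition of the onion across copies: I need that layer $j$ of the overall onion is exactly the union of the $j$th layers of the individual copies, so that the $n/k$ permutations chosen per copy are all recoverable from the output. This is where the radius $R$ of the $(n/k)$-gon must be chosen carefully — large enough (say $R$ polynomial in $k$ times the diameter of a single copy) that from the viewpoint of any one copy, the other copies appear essentially as a distant almost-collinear cluster which cannot insert itself between the copy's own nested layers. A short geometric lemma, patterned after the slope/gap computation used in Lemma~\ref{lem:permute} (where the crucial slopes are $O(1/k^2)$ and the vertical spacing between layer candidates is $\Theta(1/k)$), shows that the outermost, second-outermost, \dots points of each copy indeed appear in that same order in the global onion.

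Once this interleaving property is established, the counting argument is immediate: the algorithm must distinguish $2^{\Theta(n \log k)}$ outputs, giving decision-tree depth $\Omega(n \log k)$ and completing Theorem~\ref{thm:lowerb}. I would close by remarking that when $k = \Theta(n)$ this recovers the base case, and when $k = O(1)$ the bound is the trivial $\Omega(n)$ needed merely to read the input.
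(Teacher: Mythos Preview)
Your proposal is correct and follows essentially the same approach as the paper: the base case via Lemma~\ref{lem:permute} and the $(n/3)!$ counting argument, followed by the $(n/k)$-gon construction with independent choices of $(k/3)!$ permutations per copy to obtain $2^{\Theta(n\log k)}$ distinct outputs. You are in fact more explicit than the paper about the geometric interleaving property (that layer $j$ of the global onion coincides with the union of the $j$th layers of the copies), which the paper simply asserts by saying ``the onion depth will be $k/3$''; your plan to verify it via a slope/gap argument analogous to the proof of Lemma~\ref{lem:permute} is the right way to fill that in.
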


The lower bound still applies if
the input points come from an appropriate probability distribution (e.g.,
\cite[Claim~2.2]{Ailon11}). Thus, Yao's minimax 
principle~\cite[Chapter~2.2]{MotwaniRa95} 
yields a corresponding lower bound for any randomized algorithm.

\section{Conclusion and Further Work}

Recently, Hoffmann~\etal~\cite{hkm13} showed how to compute 
in linear deterministic time a line that stabs 
$O(\sqrt{m/(1-2\alpha)})$ disks in a set of $m$ disjoint unit 
disks and has $\alpha m$ centers on each side, for any 
$\alpha < 1/2$. They can also find a line that stabs $O(m^{5/6+\eps})$ 
disks and has exactly $m/2$ centers on each side. 
Using this, one can improve the running times of 
Lemma~\ref{lem:line-separator} and Lemma~\ref{lem:sdt_matousek} 
to linear deterministic time. Note that this does not 
impact the final running time for our original problem.

It would be interesting to understand how much the parameter
$k$ can vary for a set of imprecise bounds and how
to estimate $k$ efficiently.
Further work includes considering more
general regions, such as overlapping disks, disks of
different sizes, or fat
regions. It would also be interesting to consider the
problem in 3D. Three-dimensional onions are not
well understood. The best general algorithm is due to Chan
and needs $O (n \log^{6} n)$ expected time~\cite{c-addsf3cha2nnq-10}, 
giving more room for improvement.

\noindent\textbf{Acknowledgments.}
The authors would like to thank an anonymous reviewer for comments that improved the paper.
M.L.~supported by the Netherlands Organisation for Scientific 
Research (NWO) under grant 639.021.123.  W.M.~supported in part 
by DFG project MU/3501/1.

\bibliographystyle {abbrv}
\bibliography {onions}

\end{document}